\def\BibTeX{{\rm B\kern-.05em{\sc i\kern-.025em b}\kern-.08em
    T\kern-.1667em\lower.7ex\hbox{E}\kern-.125emX}}
\newtheorem{definition}{Definition}[section]
\newtheorem{theorem}{Theorem}
\begin{document}

\title{Optimal Accuracy-Time Trade-off for Deep Learning Services in Edge Computing Systems
}
\author[1]{Minoo Hosseinzadeh}
\author[1]{Andrew Wachal}
\author[1]{Hana Khamfroush}
\author[2]{Daniel E. Lucani}
\affil[1]{Department of Computer Science, University of Kentucky}
\affil[2]{Department of Engineering, Aarhus University}

\newcommand{\change}[1]{#1}
\newcommand{\nathaniel}[1]{\textcolor{red}{\textbf{NCH:} #1}}
\newcommand{\hana}[1]{{\color{red} \textbf{Hana:} #1}}

\newcommand{\minoo}[1]{{\color{purple} \textbf{Edited:} #1}}

\maketitle

\begin{abstract}
    With the increasing demand for computationally intensive services like deep learning tasks, emerging distributed computing platforms such as edge computing (EC) systems are becoming more popular. Edge computing systems have shown promising results in terms of latency reduction compared to the traditional cloud systems. However, their limited processing capacity imposes a trade-off between the potential latency reduction and the achieved accuracy in computationally-intensive services such as deep learning-based services. In this paper, we focus on finding the optimal accuracy-time trade-off for running deep learning services in a {three-tier EC} platform where several deep learning models with different accuracy levels are available. Specifically, we cast the problem as an Integer Linear Program, where optimal task scheduling decisions are made to maximize overall user satisfaction in terms of accuracy-time trade-off. We prove that our problem is NP-hard and then provide a polynomial constant-time greedy algorithm, called GUS, that is shown to attain near-optimal results. Finally, upon vetting our algorithmic solution through numerical experiments and comparison with a set of heuristics, we deploy it on a test-bed implemented to measure for real-world results. The results of both numerical analysis and real-world implementation show that GUS can outperform the baseline heuristics in terms of the average percentage of satisfied users by a factor of at least $50\%$.
\end{abstract}

\begin{IEEEkeywords}
Mobile edge computing, task offloading, resource management, deep learning, raspberry pi, user satisfaction, quality of experience.
\end{IEEEkeywords}

\section{Introduction}
\label{sec:Introduction}
{
The promise of \emph{edge computing}~(EC)~\cite{ETSI} has sparked ever-increasing attention in recent years. Much of this attention is a result of ever-increasing consumption and generation of data at the network edge in IoT systems. Because conventional cloud computing relies on a cluster of remote hardware resources, this invites issues for delay-sensitive applications that {are} becoming more prevalent. For instance, self-driving cars that rely on pedestrian detection using complex Deep Learning~(DL) models to perform fast real-time inference has harsh requirements with regard to both provided accuracy (i.e., accurate predictions) and time delay.
%
However, because compute resources deployed for EC are less powerful than that of the cloud computing, it is necessary to consider the trade-off with regard to computation, communication, and storage capacities of the edge servers. Much recent works have investigated strategies for adhering to these resource limitations while providing ample Quality-of-Service~(QoS)~\cite{abbas2017mobile,he2018s, farhadi2019service,turner2020meeting}.

Many emerging IoT applications rely on deep learning technology to infer knowledge from the data and make smart decisions. More recently, EC has been shown to reduce response time for deployed DL services~\cite{wang2020convergence,felemban2019picsys,wang2019edge}.
However, DL {models} are computationally expensive to run, especially for resource-constrained IoT devices. Thus, there is a need for optimal offloading strategies for handling requests for DL-based services in EC.
%
%
Furthermore, response/delay time is not the only crucial aspect of QoS for DL-based applications. Accuracy/loss of the provided DL-based services is also an important measure of the QoS. There is often a trade-off with regard to these two aspects of QoS. For instance, more costly models often take longer to run.
%
Thus, it is crucial that we have optimal accuracy-time trade-off for DL-based services in complex EC platforms.


The challenge of maximally meeting user QoS expectations via offloading strategies in the edge computing systems is well-investigated in the literature. 
Several studies explored task offloading decisions in EC systems using different objectives, such as, minimizing task completion time~\cite{liu2016delay}; minimizing both latency and chances of application failure~\cite{mao2016dynamic}; and minimizing both End Users~(EU) energy consumption and task's completion time~\cite{kamoun2015joint}.
Other works considered DL model accuracy while providing a service considering different criteria, such as energy consumption, computation, and network condition~\cite{wang2020convergence,felemban2019picsys}.
However, there still exists a large gap in understanding the trade-offs between different conflicting QoS metrics, while making request scheduling/offloading decisions for DL applications, particularly in multi-edge and cloud systems.
In this paper, we study optimal QoS-aware task offloading strategies for DL-based services in a three-tier user-edge-cloud computing platform. We cast the problem as an \emph{integer linear programming}~(ILP) problem where the constraints are inspired by the limited hardware resources available in the layer of edge clouds. We prove that our problem is NP-hard and propose an efficient greedy algorithm that provides close-to-optimal performance. We evaluate the proposed algorithm's performance w.r.t. the trade-off between provided accuracy and time delay from offloading decisions and compare its performance against several baselines. We perform this evaluation in both numerical simulations and in our own real-world edge computing test-bed. To the best of our knowledge, this is the first work to explicitly consider such trade-offs for offloading decisions in a three-tier user-edge-cloud framework. 


}
\section{Problem Formulation}
\label{sec: Problem Definition}
Here, we formally describe our considered EC architecture, as well as formally define our offloading problem. We consider a three-tier architecture.
The topmost tier represents cloud servers that can be physically distributed, but virtually centralized. Here, we consider them as one single entity, called a ``cloud server". The middle layer consists of a heterogeneous set of edge computing servers which are connected to the cloud server. We also consider that the edge servers can also directly communicate with each other, either through a back-haul or a device-to-device communication. Users represent the bottom layer in this architecture and we assume users can request different computationally-intensive services from their local EC servers. A local EC server of a user is the one that is the closest edge server to the user. 
For our work, we focus on service types that use a DL model to serve the user. However, the model is general to any service type. Different service types are used in this system, not all services can be placed on each edge server due to the edge device capacity constraints, but all service types could be placed on the central cloud server, since the cloud resources are much larger than those of the edge servers. Simply, we assume the central cloud has communication and computation constraints, but no storage constraints.
Service types represent different DL-based tasks (e.g., image classification). We consider a set $K$ of services available in the system and a set $L$ of DL model types per service. Each service type $k\in K$ has $|L|$ different implementations of a DL model that could be used to serve the task with different levels of provided accuracy. 

\textbf{Model description.} 
We consider a three-tier user-edge-cloud system and a set of requests~$N$ submitted to the system by the users at a given time. Each request $i \in N$ is submitted alongside a preferred accuracy and delay thresholds to be considered for that request's respective user satisfaction. A user with several requests can be modeled as multiple users with one request each. Note, we use ``request" and ``user" interchangeably throughout the paper. We consider a set of servers~$M$ that can serve user requests. Each server $j\in M$ has known computation and communication capacities, represented by $\gamma_j$ and $\eta_j$, respectively.
Many papers modeling EC consider storage capacity constraints; we do not consider storage constraints in this paper as the assumption is that service/model placement decision is already made. Unlike most state-of-the-art papers, we assume the cloud has limited resources --- albeit its resources are more powerful than the edge servers.
This is because the scale of the edge computing systems in real-world scenarios are very diverse. Cloud servers could be devices like desktop computers interacting with more resource-constrained edge devices than a large-scale remote cloud with seemingly unlimited resources. Thus, we directly and explicitly consider the resource capacities of the cloud server to allow our model to be more general and more realistic. Additionally, our approach \emph{allows for the} consideration of more than one cloud server in the topmost layer.

For our problem, we do not explicitly distinguish between edge servers and cloud servers because both can be modelled in the same way. The main difference is on the allocated hardware resources, i.e., cloud servers have significantly more resources. Also users cannot communicate with the cloud directly, and they can only communicate with the cloud through their local edge server. Thus, there is no user request submitted to the cloud directly.
It is assumed that at the start of each time-step, all servers $j\in M$
report their remaining computation and communication capacities along with the user requests they received. 
Therefore, the available capacity of all servers $j \in M$ and the set of user requests is assumed to be known by each server at every given time step. For simplicity, we do not consider the overhead of sending such control information in the system, as this is out of the scope of this paper. It is assumed that the user requests are submitted to their local edge servers. 

The goal of our problem is to make optimal request scheduling decisions, i.e. which server should serve a given request sent by a user to the user's local edge server. Note that for every request submitted to an edge server, one of the following scheduling decisions can be made: (a) the edge server serves the request locally, (b) the edge server offloads the received request to another available server (this includes both central cloud or other edge servers), or (c) the edge server drops the request entirely.
At the beginning of each time-step, each edge server makes a scheduling decision regarding each service request submitted by its associated users. The edge server covering (i.e., directly associated with) user $i$ is shown by $s_i$, and it is the edge server which receives user $i$'s request at the first place.

To review, when a user submits a request $i$, they are requesting some service $k\in K$. They also submit minimum required accuracy $A_i$, and maximum tolerable completion time $C_i$ to satisfy request~$i$.
A user will be satisfied only in the condition that the DL model scheduled to serve it fulfills both of its requested accuracy and completion time requirements.


\textbf{Completion time.} Users' requests experience different delays based on the decision to offload or process locally. Each request may experience the following types of delay throughout the system: \emph{queuing delay}, \emph{processing delay}, and \emph{communication delay}. 
There usually exist two queues for each edge server: one for admission control (accepting requests into the system) and one for processing the services already admitted by an edge server.
For this work, we assume that the delay due to the second queue is negligible, however, we explicitly consider the admission control queuing delay in our problem formulation. Since the focus of this work is on finding the optimal request scheduling decision at the edge servers, we ignore the communication delay created by sending the requests (and their dependencies) from users to the edge servers and receiving the results from the edge servers, as this delay will be the same for any decision made~(e.g., offloading vs. local processing) at the edge servers. We will also assume that the delay caused by running decision algorithms at the edge servers is negligible. We will discuss the running time of our proposed decision making algorithm in Section.~\ref{sec:Proposed Algorithm}. 
Now we formally define each type of delay: \emph{\textbf{1) Queuing Delay ($T^{q}_{ij}$)}} 
is the time between the arrival of request $i$ to edge server $j$ and when a decision regarding how to process the request is made on that edge server. We assume a time-slotted scenario and
at each time slot, the requests that arrived at edge server $j$ will be held in a queue until a decision is made at the end of a time frame. Each time frame consists of multiple time slots. 
\emph{\textbf{2) Processing Delay ($T^{proc}_{ijkl}$)}} 
is the time that will take for the request $i$ to be processed on server.
It depends on the service type $k$, DL model $l$ used for that service, and the processing power of edge/cloud server $j$~(CPU, GPU, RAM) processing the request.
\emph{\textbf{3) Communication Delay ($T^{comm}_{ijj'}$)}} 
is the communication delay needed to send request $i$ from edge server $j$ to another server $j'$ in case that an offloading decision is made. $T^{comm}_{ijj'}$ is computed based on our testbed results which is discussed in section~\ref{sec:Numerical Results}.
Now, we define completion time of serving request $i$ at the $j$-th server for service type $k$ using DL model type $l$ as following: if the the process is offloaded, the completion time will be calculated as $c_{ijkl} = T^{comm}_{is_{i}j} + T^{q}_{is_{i}} + T^{proc}_{ijkl}$ and if it is done locally, the completion time will be computed as $c_{ijkl} = T^{q}_{is_{i}} + T^{proc}_{ijkl}$, {where $s_i$ represents the covering edge server of request $i$ as noted before}.
\\
{\textbf{MUS problem definition.}} Now, we formally define our proposed problem that we call the \textit{Maximal User Satisfaction~(MUS)} on the Edge problem. The goal of this problem is to maximize the expected user satisfaction based on the QoS requirements submitted alongside each service request. We cast the MUS problem as an {ILP} problem. First, we introduce our definition of user satisfaction as a combination of both the {\emph{expected accuracy provided by the DL model}} and the \emph{total completion time} of the request.

\begin{definition}
    (User Satisfaction (US)). 
    We consider a user to be satisfied if, and only if, both the provided accuracy is equal or more than the user's requested accuracy and the provided completion time is equal or less than the user's requested delay. We formally define the US function for request $i$ as following:
    \begin{equation}
        US_{ijkl} =  w_{ai} \Big(\frac{{a_{ijkl} - A_i}}{Max_{as}}\Big) + 
                w_{ci} \Big(\frac {{C_i - c_{ijkl}}}{Max_{cs}}\Big)
    \end{equation}
\normalsize
    \noindent where $US_{ijkl}$ is the user satisfaction provided by serving request $i$ at the $j$-th \change{server} using DL model type $l$ of the requested service type $k$. $C_i$ and $A_i$ are the completion time and accuracy thresholds asked by user for request $i$. $a_{ijkl}$ and $c_{ijkl}$ represent the accuracy and the completion time of serving request $i$ at server $j$ using service type $k$ and model type $l$, respectively. 
    $Max_{as}$ is the maximum possible provided accuracy in the system and $Max_{cs}$ is the maximum (worst case) completion time of a task in the system. It is assumed that these values are known. Additionally, $0\leq w_{ci}\leq 1$ and $0\leq w_{ai}\leq 1$ are the weights that the user would assign to the requested delay and requested accuracy, respectively. These weights can be used to model different importance levels and priorities for accuracy and task's completion time. For example, some users may care more about getting more accurate results and they can tolerate some levels of delay. 
\end{definition}
Our problem considers a set of binary decision variable $\textbf{X}\triangleq(X_{ijkl})\;{\forall i\in N, j\in M, k\in K, l\in L}$ where $X_{ijkl}=1$ if and only if request $i$ is served by device $j$ using service $k$ and DL model $l$, 0 otherwise. Below is our proposed ILP formulation:

\begin{subequations}
    \small
    \label{eq:MUS}
    \begin{align}
        \text{Max: }   
            & \frac{1}{|N|} \Big(\sum_{i\in N,j\in M} \sum_{k\in K,l\in L} {US_{ijkl} X_{ijkl} }\Big) 
            \tag{\ref{eq:MUS}}
        \\
        \text{s.t.: } 
            &  \sum_{j\in M,k\in K,l\in L} {X_{ijkl} \leq 1}, 
            \qquad \forall i\in N,
            \label{eq:1}
            \\
            & {{X_{ijkl}\; {a}_{ijkl}} \geq {X_{ijkl}\;A_{i}}}, 
            \qquad \forall i\in N,j\in M,k\in K,l\in L,
            \label{eq:accuracy}
            \\
            & {X_{ijkl}\; {c}_{ijkl} \leq {X_{ijkl}\;{C}_{i}}}, 
            \qquad \forall i\in N,j\in M,k\in K,l\in L,
            \label{eq:delay}
            \\
            & \sum_{i,k,l} {X_{ijkl}\;{v_{ijkl}}\leq \gamma_j}, 
            \qquad \quad \forall j\in M,
            \label{eq:computation}
            \\
            & {\sum_{i,k,l}\sum_{j'\neq j}{I_{ij}\; X_{ij'kl} }\;{u_{ijkl}}\leq \eta_j}, 
            \qquad \quad\forall j\in M,
            \label{eq:communication}
            \\
            & X_{ijkl} \in {\{ 0,1\}}, 
           \qquad  \forall i\in N,j\in M,k\in K,l\in L,
            \label{eq:decVariables}
    \end{align}
\end{subequations}

\normalsize
Where,{
$I$ is an indicator vector such that $|I|= |N|\times |M|$ and $I_{ij}=1\;(\forall i\in N, j\in M)$ if and only if (user) request $i$ is directly covered by edge device $j$, and $0$ otherwise (i.e. $I_{ij}=1$ where $j=s_i$).} $u_{ijkl}$ and $v_{ijkl}$ are communication cost and computation cost of serving request $i$ on server $j$ for service type $k$ using model type $l$, respectively; and $\gamma_j$ and $\eta_j$ are the communication and computation capacity of server $j$. 
The first constraint guarantees that each request should be served in just one server using one service and one DL model in Eq.~\eqref{eq:1}, or it will be dropped. The Eq.~\eqref{eq:accuracy} ensures that if the request $i$ is going to be served by the $j$-th device, the accuracy of the result should be more than or equal to the requested accuracy by the user. The Eq.~\eqref{eq:delay} guarantees that the completion time of serving the request $i$ has to be less than or equal to user's requested delay for request $i$. Constraints ~\eqref{eq:computation} and ~\eqref{eq:communication} ensure that the total computation or communication costs needed to process or offload all requests coming to device $j$ must not be more than the overall computation and communication capacity of that device, respectively. In this optimization problem, there exist three possible scheduling choices: 1) \textit{Local processing.} The request will be served on the edge server; 2) \textit{Offloading.} The request will be offloaded to {either} cloud server or one of the neighboring edge servers; 3) \textit{Drop.} The request will not be served and will be dropped.\\
\textbf{Special case.} Although the proposed formulation considers strict QoS requirements, we can however define other cases as a special case of this problem. For instance, by relaxing constraints (2b) and (2c), our problem can model scenarios where users QoS requirements are more of a suggestion rather than a hard constraint. This means that a user can be served even if its QoS requirements are not strictly met. 
\begin{theorem}
    The proposed MUS problem is NP-hard. 
    \label{theorem:np}
\end{theorem}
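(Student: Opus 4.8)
The plan is to prove NP-hardness via a polynomial-time reduction from the decision version of the $0/1$ Knapsack problem, which is NP-complete. The key observation is that a heavily restricted special case of MUS --- one edge server, one service type, one DL model --- is itself nothing but a $0/1$ Knapsack instance, so MUS can be no easier than Knapsack.

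Concretely, starting from a Knapsack instance with items $1,\dots,n$, non-negative profits $p_i$, weights $w_i$, capacity $W$, and target $V$, I would build the following MUS instance: let $N=\{1,\dots,n\}$, $|M|=1$ (call the single server $j^{*}$), $|K|=|L|=1$, and $I_{ij^{*}}=1$ for every $i$, so that no request can ever be offloaded and constraint~\eqref{eq:communication} is vacuous. Set the computation cost $v_{ij^{*}kl}=w_i$ and computation capacity $\gamma_{j^{*}}=W$, making constraint~\eqref{eq:computation} exactly the knapsack weight budget. To render the strict QoS constraints~\eqref{eq:accuracy} and~\eqref{eq:delay} harmless, set $A_i=0$ and choose $C_i$ larger than the worst-case completion time $Max_{cs}$, so that both constraints hold whenever a request is served at all. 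Finally, set $w_{ai}=1$, $w_{ci}=0$, and $a_{ij^{*}kl}=p_i$, so that $US_{ij^{*}kl}=p_i/Max_{as}$; if the model requires accuracy values within a fixed bounded range, rescale all $p_i$ by a common constant, which keeps the instance of polynomial size. This construction is clearly produced in polynomial time.

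Next I would establish the equivalence of the two problems. Feasible MUS solutions are in bijection with feasible knapsack selections $S=\{\, i : X_{ij^{*}kl}=1 \,\}$, and under this correspondence the MUS objective $\frac{1}{|N|}\sum_{i,j,k,l} US_{ijkl}X_{ijkl}$ equals $\frac{1}{n\,Max_{as}}\sum_{i\in S} p_i$. Because $n$ and $Max_{as}$ are fixed positive constants of the constructed instance, the MUS optimum is at least $\frac{V}{n\,Max_{as}}$ if and only if the original Knapsack instance admits a selection with total profit at least $V$. Thus any algorithm solving MUS also solves $0/1$ Knapsack, and therefore MUS is NP-hard.

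The reduction itself is routine; the only points needing a moment's care --- rather than a genuine obstacle --- are checking that the normalization factors $|N|$, $Max_{as}$, $Max_{cs}$ and the strict constraints~\eqref{eq:accuracy}--\eqref{eq:delay} do not distort the correspondence. The factors are harmless because they are instance-wide positive scalars that preserve the ranking among solutions, and the strict constraints are neutralized because the construction makes them slack. If a reduction that keeps multiple servers genuinely active is preferred, the same scheme works starting from the Generalized Assignment Problem (or the Multiple Knapsack problem), mapping items to requests and bins to servers; I would present the single-server version since it is the most transparent.
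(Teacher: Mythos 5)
Your reduction is correct, but it takes a genuinely different route from the paper. The paper reduces from the Maximum Cardinality Bin Packing problem: it maps each item $i$ to a request whose combined communication-plus-computation cost equals the item weight $p_i$, maps the $m$ bins to $m$ servers of identical aggregate capacity $\gamma_j+\eta_j=C$, and argues that maximizing the number of packed items is equivalent to maximizing the number of served requests. That reduction keeps multiple servers active (so it exercises the assignment aspect of MUS), but it is sketchier on exactly the points you took care to pin down: it does not spell out how the $US_{ijkl}$ values are chosen so that maximizing the weighted-sum objective coincides with maximizing cardinality, and it lumps the separate capacity constraints into a single budget even though the communication constraint in the ILP only charges the covering server when a request is offloaded. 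Your single-server Knapsack reduction sidesteps both issues: with $|M|=1$ the communication constraint is vacuous, the computation constraint becomes precisely the knapsack budget, and setting $w_{ci}=0$, $A_i=0$, and $a_{ij^{*}kl}$ proportional to the profit makes the MUS objective a fixed positive rescaling of the knapsack profit, so the correspondence of optima is immediate and fully justified. What you give up is that your hard instance is degenerate with respect to offloading; what you gain is a tighter, more transparent argument whose only delicate points (the normalization constants and the strict QoS constraints) you explicitly neutralize. Either reduction establishes NP-hardness; yours is the more rigorous of the two as written.
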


\begin{proof}
    We prove Theorem~\ref{theorem:np} by a reduction from the NP-hard Maximum Cardinality Bin Packing~(MCBP) problem to our problem.
    We are given $m$ bins of identical capacity $C$ and a set of $N = \{1,2,...,n\}$ items of weights (size) $p_{i} (i=1,...,n)$. The objective is to maximize the number of items packed into the $m$ bins without exceeding bin capacities and without splitting items\cite{loh2009solving}. The decision making variable is defined as $x_{ij}=1$ if item $i$ is placed into bin $j$, and zero otherwise.
    We will show that a simple instance of our problem would be as hard as the MCBP problem. We now construct an instance of our problem. 
    For each item $i \in n$ construct a request $r_i$ with total cost of getting satisfied equal to \textbf{$ u_{i} + v_{i} = p_i$}, given that the communication and computation costs of serving request $i$ on any server will be equal, i.e., $u_{i}=u_{ijkl},v_{i}=v_{ijkl}, \forall j,k,l$. We also construct $m$ servers with $m$ bins.
    We consider that all edge servers and cloud servers have identical capacities equal to $C$ equal to $\gamma_j + \eta_j = C\; \forall j$. The goal is to maximize the number of items/requests which can be placed into the $m$ edge servers/bins of identical capacities.
    We claim that the optimal solution to the constructed instance of our problem gives the optimal solution to the MCBP problem. This is because an algorithm that solves our problem can solve the MCBP. Since MCBP is NP-hard, this concludes our proof~\cite{chen2015efficient}.
\end{proof}

\section{Proposed Greedy Algorithm}
\label{sec:Proposed Algorithm}
Since MUS problem is NP-Hard (Theorem~\ref{theorem:np}), we propose a greedy algorithm which we call \emph{GUS} that attains near-optimal results w.r.t. to the objective function defined in Eq.~\eqref{eq:MUS}.
At each round, a request is served using the following rule: The GUS algorithm calculates the US of each \textit{candidate server} which has service $k$ and model type $l$ for serving request $i$, and has enough computation capacity to serve this request, and is able to satisfy the request's minimum requirements to be a candidate server; meaning that the expected accuracy of its model should be greater than or equal to the requested accuracy and the expected total completion time provided by this server at the time of decision should be less than or equal to the requested delay. The best server with the highest US which has enough capacity to serve request $i$ will be then selected. If request $i$ is going to be offloaded, the communication capacity of the server that covers request $i$~($s_i$) should be also available. If there is enough capacity, the request will be assigned to server $j$; otherwise the algorithm will check next best candidate server with next highest US. If there is no server which can satisfy the request, it will be dropped. At the end of each round, the algorithm updates the remaining capacity for each edge cloud. The algorithm will repeat this process for all the requests. The pseudo-code for the proposed GUS algorithm is provided in Algorithm~\ref{alg:GO}.
The complexity of finding optimal solution for the MUS problem in the worst case is of $O((|L||M|)^{|N|})$.
The running time of the GUS algorithm in the worst case is $O(|N|((|L||M|)^2 + |M| + |L||M|)) = O(|N|((|L||M|)^2))$.

\begin{algorithm}[t]
    \SetAlFnt{\footnotesize}
    \SetAlCapFnt{\small}
    \SetAlCapNameFnt{\small}
    \DontPrintSemicolon
    \SetNoFillComment
    \SetKwInOut{Input}{Input}\SetKwInOut{Output}{Output}
    \Input{{Given $N$, $M$, $K$, $L$, $I$, $Max_{as}$, $Max_{cs}$, $A$, $C$}}
    \Output{Find $X_{ijkl}$ for each request $i$}
    \ForEach{request $i \in$ Requests}
    {
        $s_i$ = $\{j\;|\;I_{ij} = 1\}$\;
        \ForEach{server $j$ $\in$ sorted servers having service $k$ based on higher US}
        {
            \If{$c_{ijkl}\leq C_i$ and $a_{ijkl} \geq A_i$ and $v_{ijkl} \leq \gamma_j $}
                {
                \If{$j$ == $s_i$}
                    {
                    $X_{ijkl}$ = 1\;
                    Locally process request $i$\;
                    update $\gamma_j$\;
                    break\;}
                \ElseIf{$u_{ijkl} \leq \eta_{s_i}$}
                    {
                    $X_{ijkl}$ = 1\;
                    Offload request $i$\ to place $j$\;
                    update $\gamma_j$ and $ \eta_j$\;
                    break\;}
                }
            }
    }
    \caption{Proposed Greedy Algorithm (GUS)}
    \label{alg:GO}
\end{algorithm}

\section{Results}
\label{sec:Numerical Results}
\begin{figure*}[t!]
\begin{subfigure}{0.25\textwidth}
        \includegraphics[width=\linewidth]{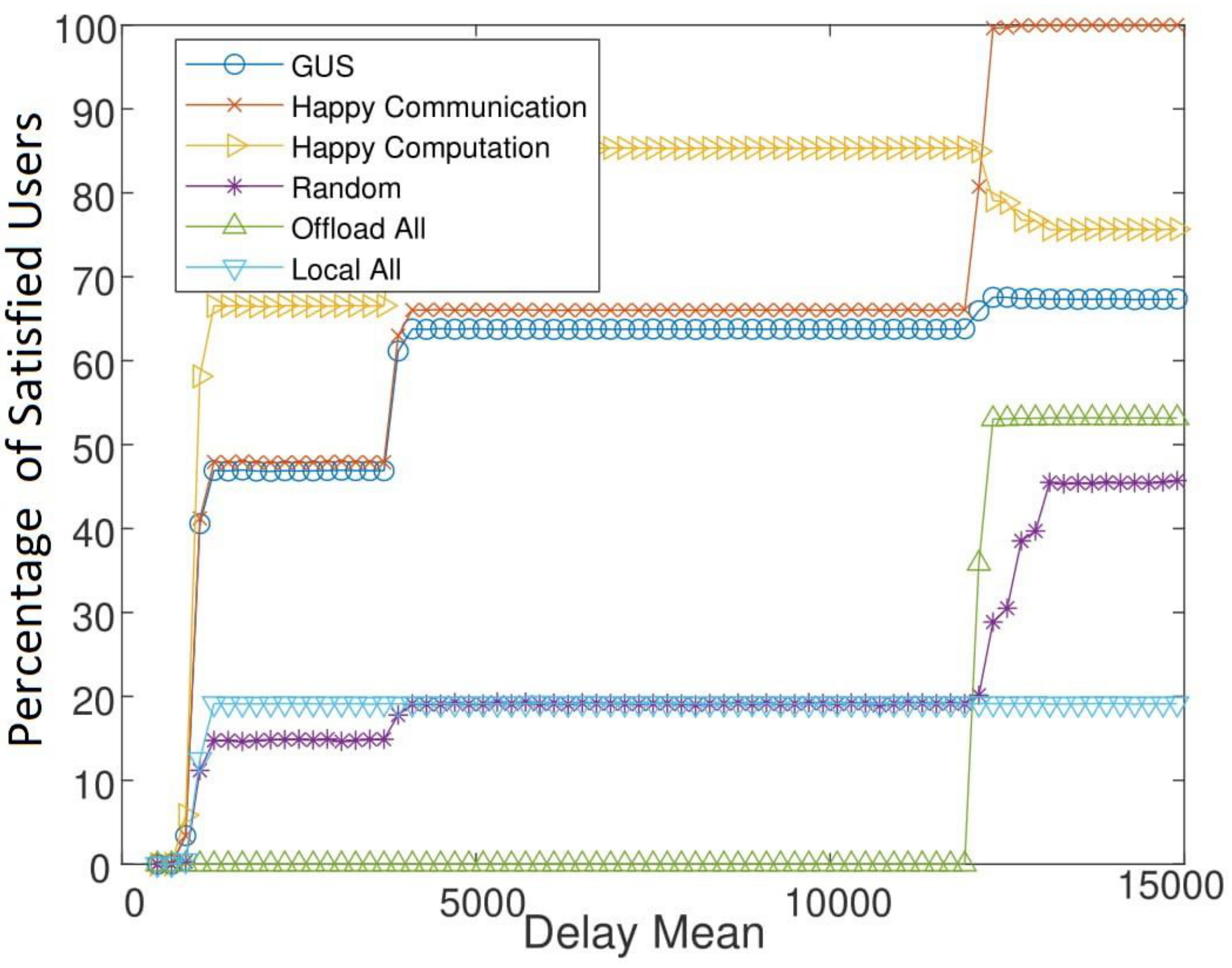}
        \caption{} \label{fig:Del-US-100}
    \end{subfigure}\hspace*{\fill}
    \begin{subfigure}{0.25\textwidth}
        \includegraphics[width=\linewidth]{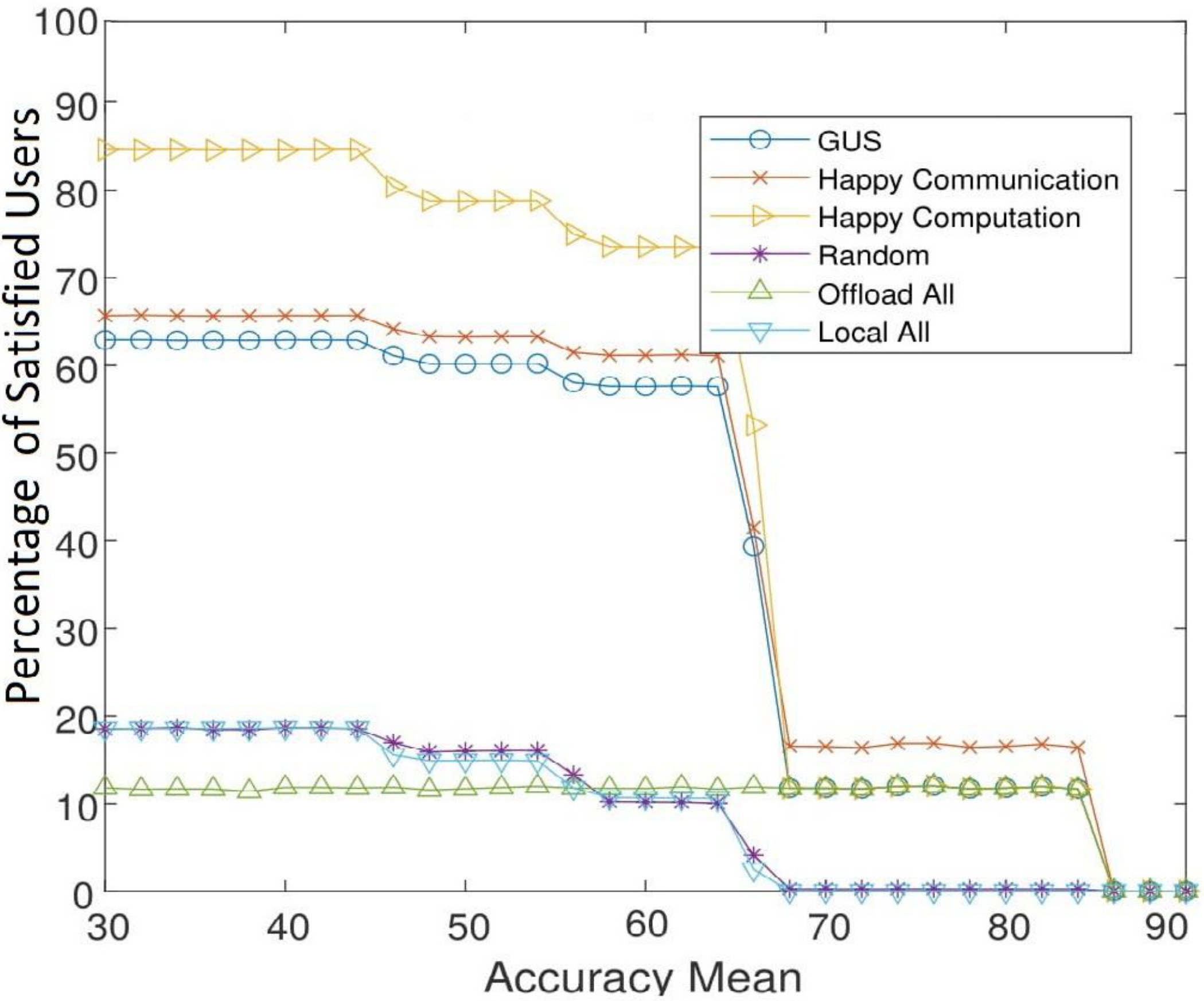}
        \caption{} \label{fig:Acc-US-100}
    \end{subfigure}\hspace*{\fill}
    \begin{subfigure}{0.25\textwidth}
        \includegraphics[width=\linewidth]{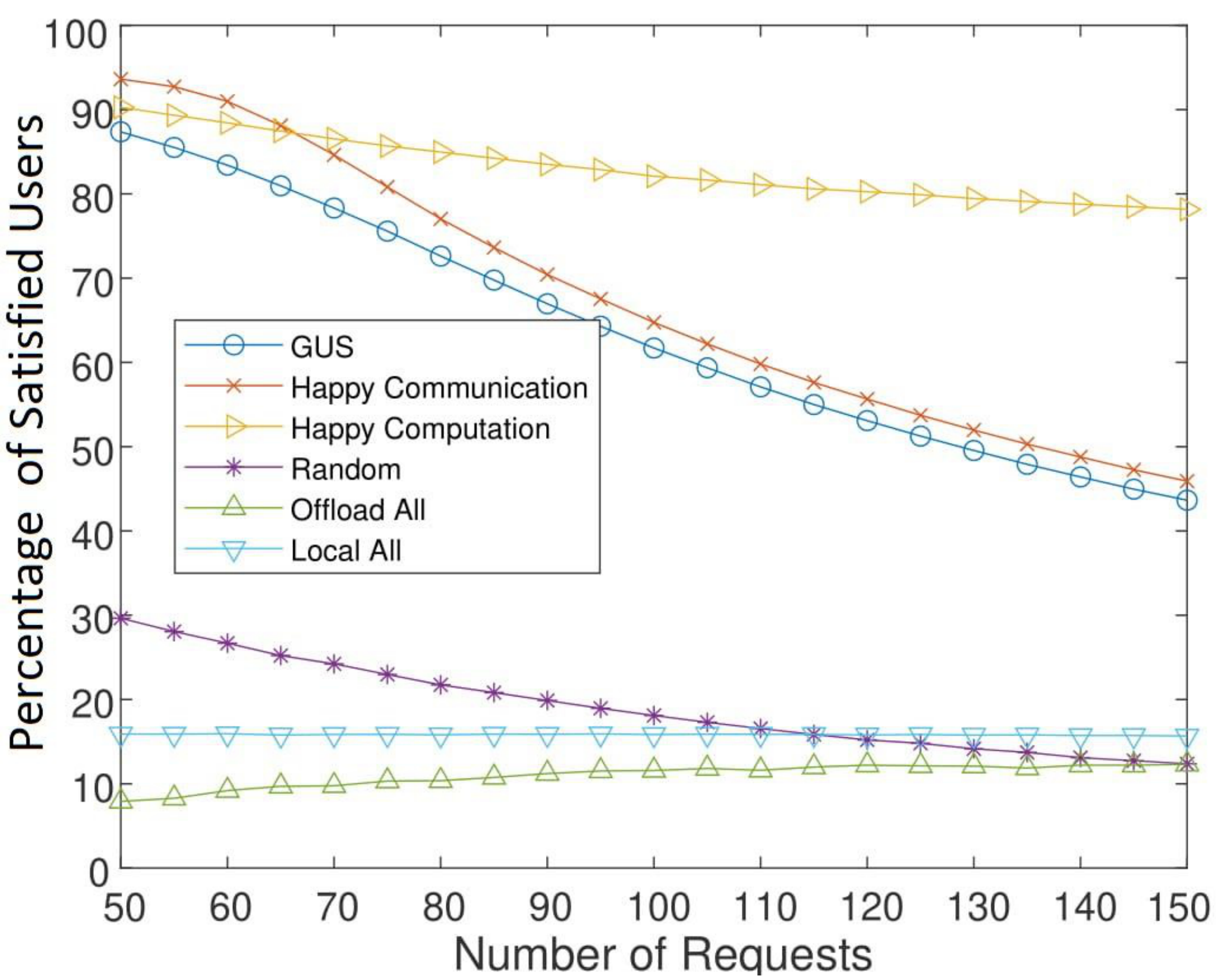}
        \caption{} \label{fig:n-US}
    \end{subfigure}\hspace*{\fill}
    \begin{subfigure}{0.25\textwidth}
        \includegraphics[width=\linewidth]{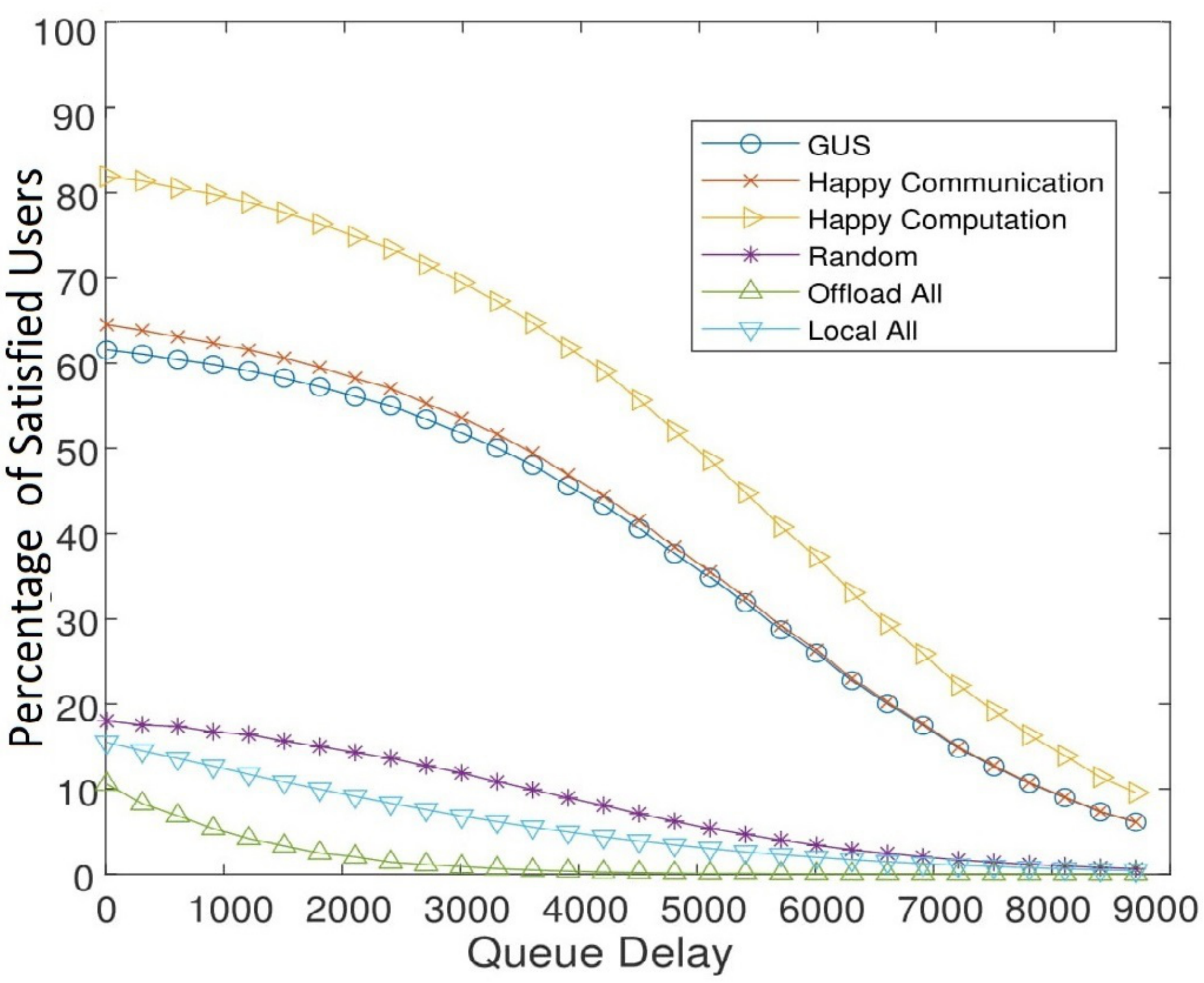}
        \caption{} \label{fig:Queue-US-100}
    \end{subfigure}
    \medskip
    \begin{subfigure}{0.245\textwidth}
        \includegraphics[width=\linewidth]{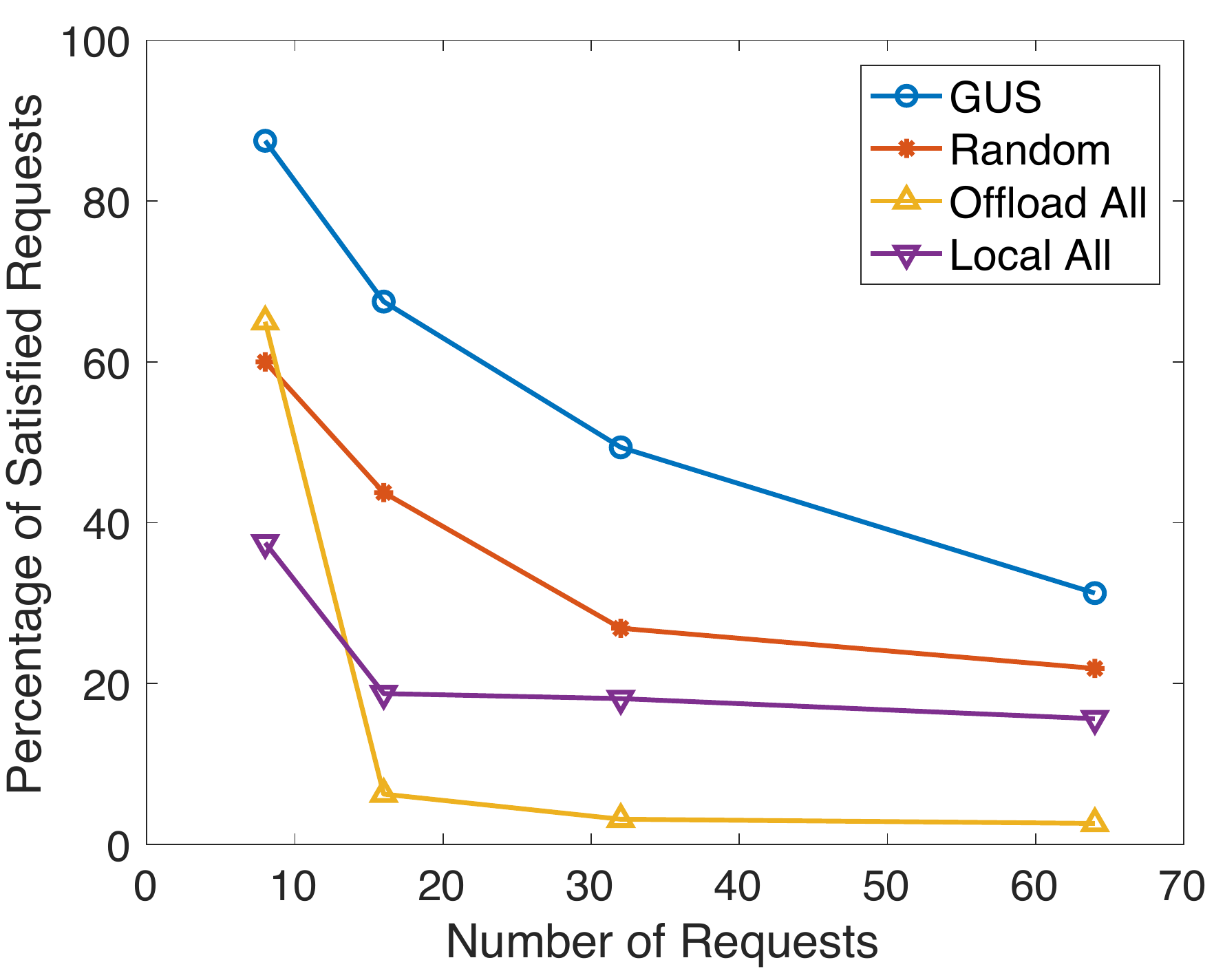}
        \caption{} \label{fig:Del-US-100-2}
    \end{subfigure}\hspace*{\fill}
    \begin{subfigure}{0.245\textwidth}
        \includegraphics[width=\linewidth]{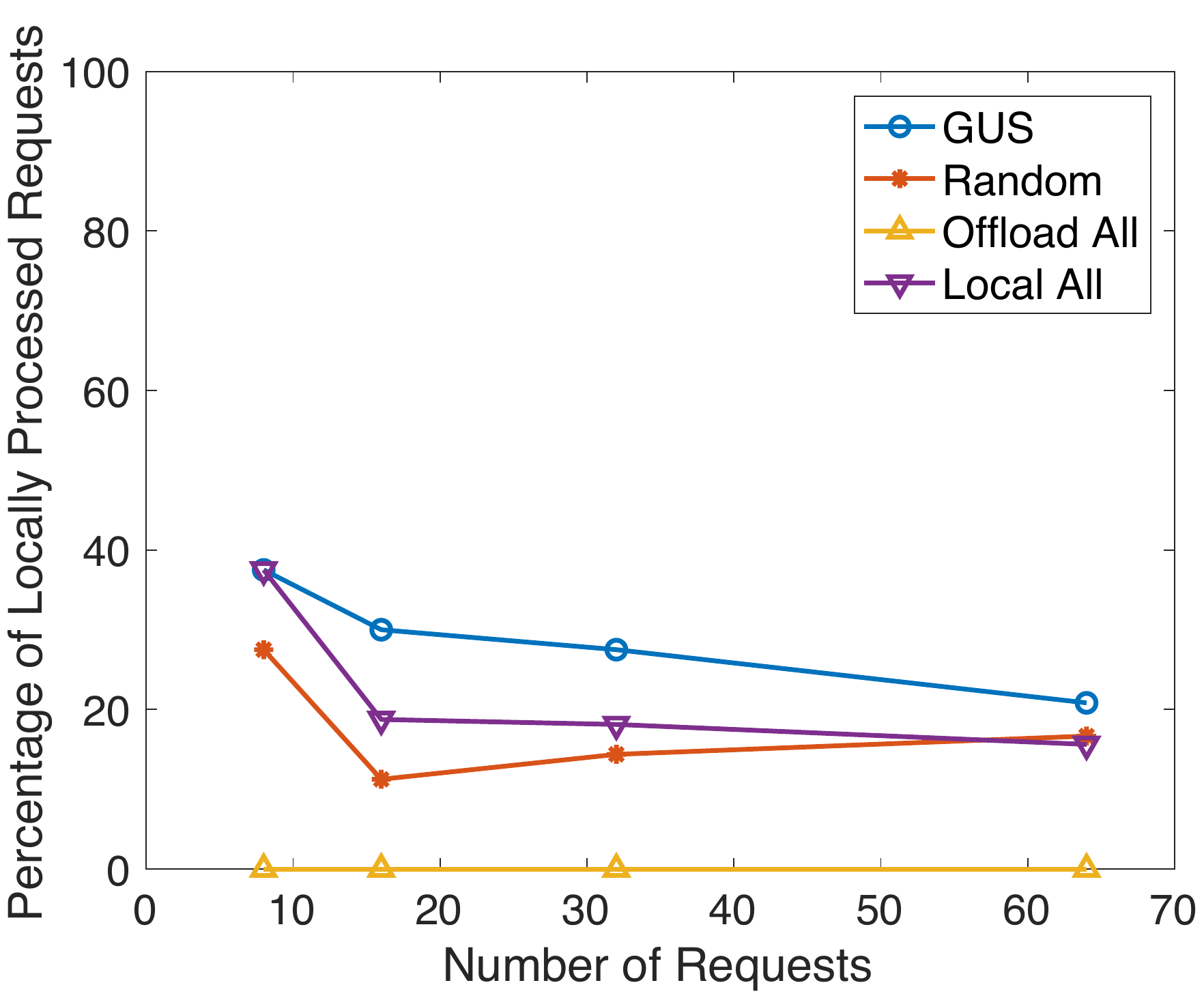}
        \caption{} \label{fig:Del-US-100-3}
    \end{subfigure}\hspace*{\fill}
    \begin{subfigure}{0.245\textwidth}
        \includegraphics[width=\linewidth]{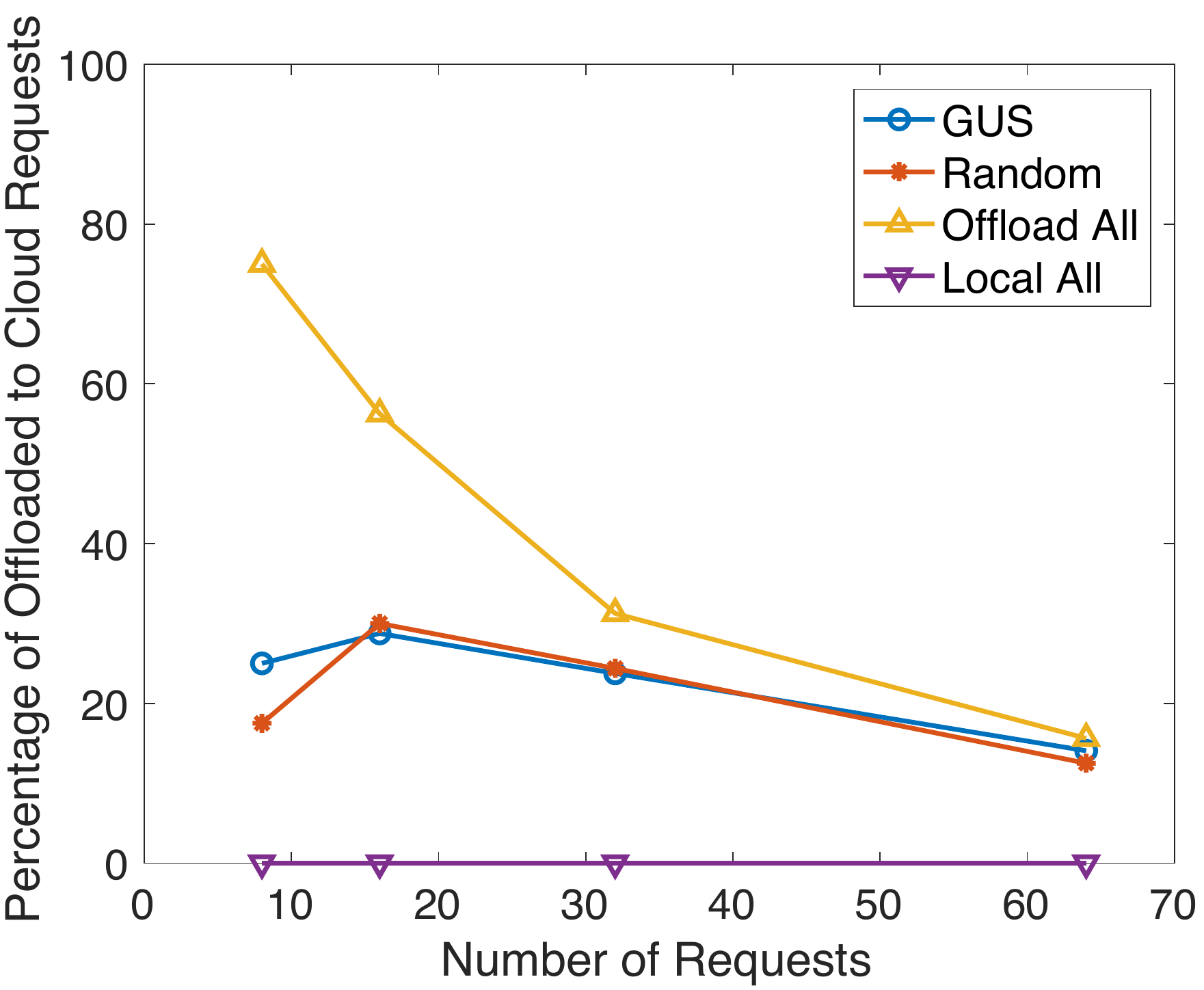}
        \caption{} \label{fig:Del-US-100-4}
    \end{subfigure}\hspace*{\fill}
    \begin{subfigure}{0.245\textwidth}
        \includegraphics[width=\linewidth]{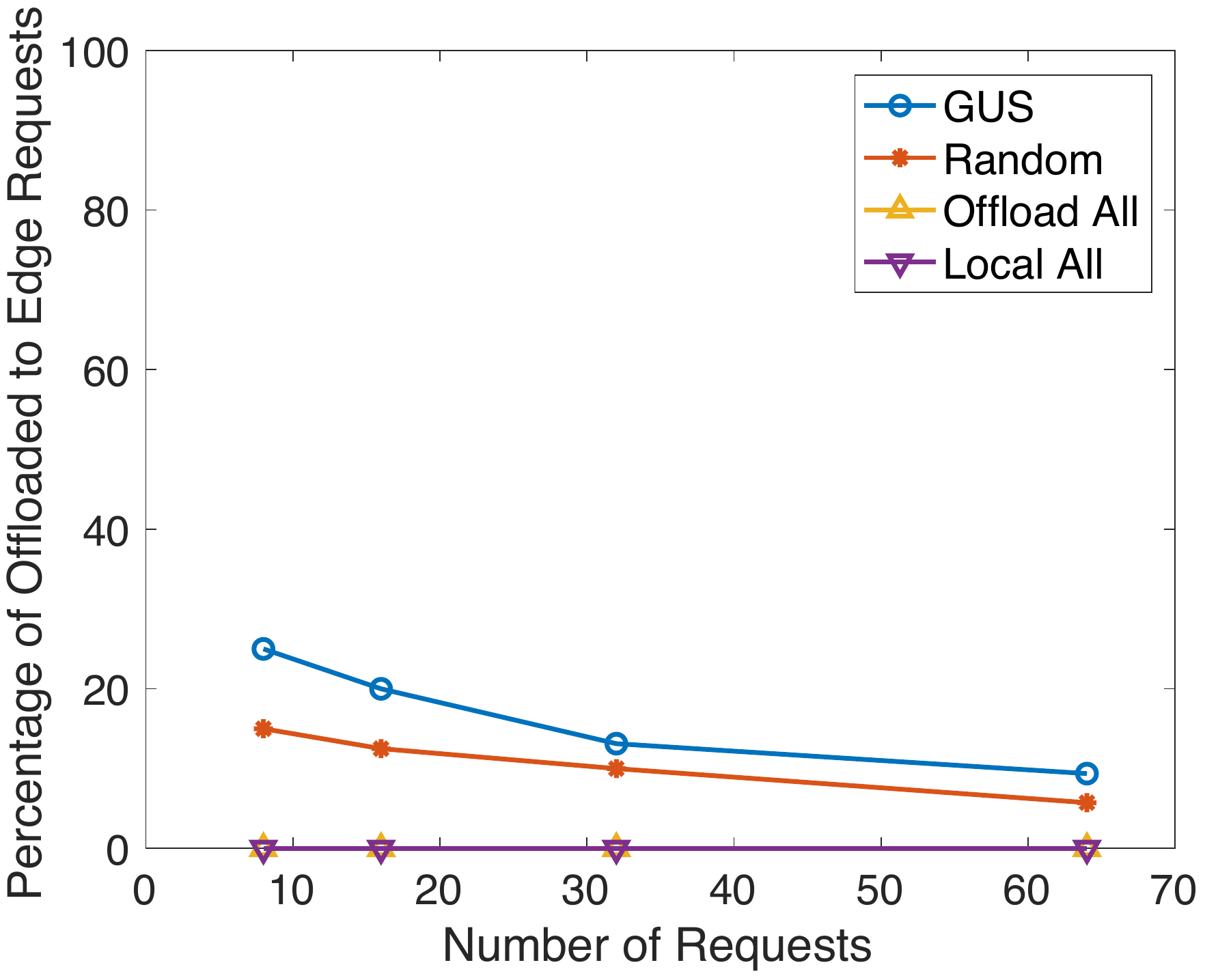}
        \caption{} \label{fig:Del-US-10-5}
    \end{subfigure}
    \caption{\small Performance Evaluation Results: Numerical Results (a)-(d); Testbed Results (e)-(h)} 
    \label{fig:Results}
\end{figure*}


We first validated the performance of our proposed algorithm by comparing the
results obtained by GUS and that of the optimal solver (computed by IBM ILOG CPLEX v 12.10.0.0 ) for small test cases. Our results confirm that the proposed algorithm performs close-to-optimal solution (achieving in average $90\%$ of the optimal value) in terms of overall user satisfaction for the scenarios we have tested. The results of such comparison is omitted due to space.

\textbf{Baseline algorithms.} We also provided five baseline algorithms for making a comparison between our proposed greedy algorithm and other possible solutions: {\textit{1)~Random-Assignment}}
where one of the servers~(edge/cloud servers) would be selected randomly. If it can satisfy the user requirements and there is enough capacity, it will serve the request; otherwise, the request will be dropped. {\textit{2)~Offload-All}} where sends all the requests from edge servers to the cloud servers to serve.
{\textit{3)~Local Processing-All}}
that chooses only the local edge server for serving each of the requests.
{\textit{4)~Happy-Computation}} in which we assume that there is no limit on the edge servers for the computation capacity and we relax the computation constraint~\eqref{eq:computation} in the optimization model.
{\textit{5)~Happy-Communication}} where we assume no limit on the edge servers' communication capacity and we relax the communication constraint~\eqref{eq:communication} in the optimization model.

\textbf{Numerical Results.} In the following scenarios, we used nine edge servers and one cloud server for the test. The communication delay between the edge servers comes from different communication delays that we got from our testbed in different situations in which the bandwidth is equal to $600$ bytes/msec on average. The average delay between the cloud server and edge server is based on our testbed. To account for edge servers heterogeneity, we assume that there are three types of edge servers in the system which differ based on their storage, communication, and computation capacities. Additionally, we run each test for $20000$ Monte-Carlo runs and report the average. We set $|N|=100$, $|M|=10$, $|K|=100$, $|L|=10$, and it is assumed that services are randomly placed on the edge servers based on their associated storage capacity. We investigated the effect of changing requested delay and requested accuracy in the proposed system. The processing delay of running DL models is based on our testbed results which is between $950$ to $1300$~(msec) for edge servers, and $300$~(msec) for the cloud server.
The requested accuracy~($A_i$) and the requested delay~($C_i$) are both generated according to a normal distribution of $\cal{N}$(45$\%$,10$\%)$ and $\cal{N}$(1000,4000) msec, respectively. $T^q$ is a random number between $0$ and $50$ generated according to a uniform distribution. The $Max_{as}$ and $Max_{cs}$ in the system are $100\%$ and $12000$~(msec), respectively. We assume $w_{ai}=w_{ci}=1$, so equal weights will be considered for both requested accuracy and requested delay. As shown Fig.~\ref{fig:Results}(a), when the requested delay range increases, the total served requests will increase. The reason is that there are more requests which could be sent to the cloud server. When the requested accuracy increases, the number of satisfied requests decreases. The reason is that there would not be a DL model on the edge server which can provide the accuracy that user asked~(Fig.~\ref{fig:Results}(b)). When the number of requests increases, the satisfied users percent decreases~(Fig.~\ref{fig:Results}(c)). The reason is each EC server has limited capacity.
Finally, when the requests are received at the edge servers, they will be kept in the queue until making decision and serving them. The average queue delay is a function of queue length. When the queue delay increases, the number of satisfied users decreases~(Fig~\ref{fig:Results}(d)). The reason is that the completion time will be more than the requested delay, and therefore, we may not be able to satisfy users and have to drop them.

\textbf{Testbed Implementation.}
\label{sec:Implementation}
We also evaluate our solutions using a real-world testbed, comprised of several different devices to support resource and device heterogeneity. A Linux desktop (Intel core i5-3470-3.20 GHz, RAM 8GB) serves as the central cloud server. It is connected to a NetGear R6020 router, separated by roughly 6 meters. To imitate the delay between the cloud and edge serves, we connected the router to a Raspberry Pi~(RP) 3B~(quad core, RAM 1GB) as a forwarder between the router and the edge servers --- it is roughly 10 meters away from the NetGear router. We then have two RP 4s~(quad core, RAM 4GB) which serve as our edge servers for our testbed. They are connected to the forwarder and are placed on a different floor than the forwarder in the same building, roughly 11 meters away from the forwarder. The user devices are represented by two RP 3Bs which are physically close to our edge servers, at being placed within 1 meter from the edge servers. The services we consider in our testbed are two pre-trained models: \emph{GoogleNet}~\cite{szegedy2015going} (exclusively available on the cloud) and \emph{SqueezeNet}~\cite{iandola2016squeezenet} (placed on edge servers, but provides poorer accuracy and resource cost). The data we use for submitting service requests for image classification is provided by the ImageNet dataset~\cite{imagenet_cvpr09}. We implement a simple management program in C++ that runs the decision algorithms and makes offloading decisions w.r.t. user requests. Common communication errors had to be addressed as well (but for brevity we do not elaborate on them here).

\textbf{Testbed Results. }
For our testbed results, users requested thresholds on completion time and accuracy are set at $C_i=53000$~(msec) and the $A_i=50\%$ for all requests. We used a fixed number equal to one for both the $w_{ai}$ and the $w_{ci}$ $\forall i \in N$, a fixed length of $4$ for the queue, and the length of each time frame to $3000$~(msec), which is essentially representing how often we run a decision algorithm if the queue is not full.
We used multi-threading on the edge servers and we set the processing capacity equal to three as the maximum number of threads that can run the image classification in each time-step. The communication capacity is equal to ten for each edge server. This means that each edge server can only send $10$ images to other servers in a time slot. We repeated each test for two hours to remove the effects of randomness due to the wireless channel and averaged the results. When running our decision algorithms, we compute the expected processing delay based on the data we first collected from our testbed. As mentioned, the processing delay of running SqueezeNet on Raspberry 4 is $1300$~(msec), on average. And, the processing delay of running GoogleNet on the cloud server is equal to $300$~(msec).
Given the dynamics of the real-world scenarios and the fact that the wireless channel is changing all the time, we update the value of the expected communication delay used in our GUS algorithm according to the following simple rule. The algorithm starts with an initial estimated value for the expected communication delay based on our previous observations and historical data. At each iteration, this value is updated based on the observations made from the previous round. 
Given the expected communication bandwidth $B$ at time $t$, we compute the new expected communication bandwidth at time $t+1$ as $E[B_{{t+1}}] = \frac{B_{t} + B_{{t-1}}}{2}$.
Using the expected bandwidth, we can then compute the expected communication delay for each image based on the expected bandwidth and the size of the image. For our tests, we started with $B=600$ bytes/msec for the expected communication bandwidth between the edge and the cloud. We may also have to adapt the $Max_{cs}$ parameter given that the value of expected completion time is updated. 
The results of the testbed implementation are shown in Fig~\ref{fig:Results}. We increased the total number of requests sent to the EC system and observed the performance of GUS, and three heuristics namely, ``random", ``local all" and ``offload all" heuristics. Fig~\ref{fig:Results}(e) shows the average user satisfied percentage proving that GUS is always providing a much better rate of US compared to the other two heuristics. In the case of the ``offload all" and ``local all" heuristic, by increasing the number of requests, the percentage of satisfied users is decreasing because of the bottleneck created by the communication and computation capacity of the edge servers, respectively. This shows that an optimal combination of local processing and offloading (as provided by GUS) can significantly increase the US of the system. Fig.~\ref{fig:Results}(f) shows the percentage of requests locally processed, Fig.~\ref{fig:Results}(g) shows the percentage of the requests offloaded to the cloud, and Fig.~\ref{fig:Results}(h) shows the percentage of requests that are offloaded to the other edge servers.
Comparing the results provided in Fig~\ref{fig:Results}, 
we can say that the percentage of the users satisfied using GUS is on average $50\%$ more than that of the heuristics.

\section{Conclusion}
In this paper, we proposed a new user satisfaction~(US) metric that considers the trade-off between accuracy and delay of the provided service for DL services in EC systems. We pose an offloading optimization model to maximize the proposed US given a set of capacity constraints on the edge servers and proved that this is NP-hard. Hence, we came up with a greedy algorithm, called GUS, to solve the model in polynomial-time. Finally, we evaluate on a real EC testbed consisting of devices of varying resources. 
The implementation results also prove the effectiveness of our proposed algorithm in comparison with baseline heuristics.
Our future work will focus on considering different priorities for the requests and impacts of user mobility on the provided trade-off.

\section*{Acknowledgements}
\noindent This work is funded by research grants provided by the Cisco Systems Inc. and the National Science Foundation (NSF) under the grant numbers 1215519250 and 1948387 respectively.
\bibliography{refs}
\bibliographystyle{ieeetr}
\end{document}